\newcommand{\be}{\begin{equation}}
\newcommand{\ee}{\end{equation}}
\newcommand{\bes}{\begin{equation}\begin{aligned}}
\newcommand{\ees}{\end{aligned}\end{equation}}
\newcommand{\ben}{\begin{equation}\nonumber\begin{aligned}}
 \newcommand{\R}{\mathbb{R}}
\newcommand{\N}{\mathbb{N}}
\renewcommand{\leq}{\leqslant}
\renewcommand{\geq}{\geqslant}
\newtheorem{thm}{Theorem}[section]
\newtheorem{lem}[thm]{Lemma}
\newtheorem*{main thm}{Main Theorem}
\numberwithin{equation}{section}
\begin{document}

\title{A sharp bound on RIC in generalized orthogonal matching pursuit}

\author{Wengu~Chen and Huanmin~Ge
\thanks{W. Chen is with Institute of Applied Physics and Computational Mathematics,
Beijing, 100088, China, e-mail: chenwg@iapcm.ac.cn.}
\thanks{H. Ge is with Graduate School, China Academy of Engineering Physics,
Beijing, 100088, China, e-mail:gehuanmin@163.com.}
\thanks{This work was supported by the NSF of China (Nos.11271050, 11371183)
.} }

\maketitle

\begin{abstract}
Generalized orthogonal matching pursuit (gOMP) algorithm has received much attention in recent years as a natural extension of orthogonal matching pursuit. It is used to recover  sparse signals in compressive sensing.
In this paper, a new bound is obtained for the exact reconstruction of every $K$-sparse signal via the gOMP algorithm in the noiseless case. That is, if the restricted isometry constant (RIC) $\delta_{NK+1}$ of the sensing matrix $A$ satisfies
\begin{eqnarray*}
\delta_{NK+1}<\frac{1}{\sqrt{\frac{K}{N}+1}},
\end{eqnarray*}
then the  gOMP can perfectly recover every $K$-sparse signal $x$ from $y=Ax$. Furthermore, the bound is proved to be sharp in the following sense. For any given positive integer $K$,  we construct a matrix $A$ with the RIC
\begin{eqnarray*}
\delta_{NK+1}=\frac{1}{\sqrt{\frac{K}{N}+1}}
\end{eqnarray*}
such that the gOMP may fail to recover some $K$-sparse signal $x$.
 In the noise case, an extra condition on the minimum magnitude of the nonzero components of every $K-$sparse signal combining with the above bound on RIC of the sensing matrix $A$ is sufficient to recover the true support of every $K$-sparse signal by the gOMP.
\end{abstract}

{Keywords: Sensing matrix, Generalized orthogonal matching pursuit,
Restricted isometry constant, Sparse signal.}

\section{Introduction}
It is well known that compressive sensing acquires sparse signals at a rate greatly below Nyquist rate. It has attracted growing attention in recent years \cite{1}-\cite{7}. The main aim of compressive sensing is  to reconstruct signal from inaccurate and incomplete measurements.
One consider the following compressive sensing model:
\begin{eqnarray*}
y=Ax+e,
\end{eqnarray*}
where $y\in \R^m$ is a measurement vector, the matrix $A\in\R^{m\times n} \ (m\ll n)$ is a sensing matrix, the vector $x\in\R^n$ is a unknown $K$-sparse signal $(K\ll n)$ and $e\in\R^m$ is a measurement error vector. The goal is to recover unknown signal $x$ based on $y$ and $A$.
 In this paper, denote by $A_i\ (i=1,2,\cdots,n)$ the $i$-th column of $A$ and all columns of $A$
are normalized, i.e., $\|A_i\|_2=1$ for $i=1,2,\cdots,n$.  Define
the support of the vector $x$ by $T=supp(x)=\{i|x_i\neq 0\}$ and the
size of its support  with $|T|=|supp(x)|$.  For a signal $x$, if
$|supp(x)|\leq K$, $x$ is called $K$-sparse.

For the recovery of the $K$-sparse signal $x$, the most intuitive approach is to solve the following optimization problem
\begin{eqnarray}\label{p1}
\min_{x}\|x\|_0\ \ \ \ \textrm{subject}\ \textrm{to}\  Ax-y\in \mathcal{B},
\end{eqnarray}
where $\|x\|_0$ denotes the $l_0$ norm of $x$, i.e., the number of nonzero coordinates, $\mathcal{B}$ is a bounded error set, i.e., $\mathcal{B}=\{e\in\R^m\ |\  \|e\|_2\leq\varepsilon \}.$  Particularly, in the  noiseless case, $\mathcal{B}=\{0\}$.
Unfortunately,  it is well-known that the above optimization problem is NP-hard. Therefore,
 researchers seek computationally efficient methods to approximate the sparse signal $x$, such as
   $l_1$ minimization \cite{8}, $l_p\,(0<p<1)$ minimization \cite{9}, greedy algorithm \cite{10} and so on.

 To ensure that the $K$-sparse solution is unique, we shall need the restricted isometry property (RIP) introduced by Cand\`{e}s and Tao  in \cite{8}. A matrix $A$ satisfies the restricted isometry property of order $K$ if there exists a constant $\delta_K$ such that
\begin{eqnarray}\label{p2}
(1-\delta_K)\|x\|_2^2\leq\|Ax\|_2^2\leq(1+\delta_K)\|x\|_2^2
\end{eqnarray}
holds for all $K-$sparse signals $x$. And the smallest constant $\delta_K$ is called as the restricted isometry constant (RIC). Cand\`{e}s and Tao also proposed that if $\delta_{2K}<1$, the above optimization problem has a unique $K$-sparse solution \cite{8}.  Cand\`{e}s showed that if $\delta_{2K}<\sqrt{2}-1$ then the above optimization problem \eqref{p1} is equivalent to the $l_1$ minimization problem in \cite{2}. Up to now, there are many results  improving the bound on the RIC such as \cite{4}, \cite{7} and \cite{15}-\cite{152}.

Recently,  there is  a family of iterative greedy algorithms which have attracted significant attention to recover sparse signals including orthogonal least square (OLS) \cite{aa}, orthogonal matching pursuit (OMP) \cite{11}, generalized orthogonal matching pursuit (gOMP) \cite{20},  regularized orthogonal matching pursuit (ROMP) \cite{12}, orthogonal multi-matching pursuit (OMMP) \cite {121},  stagewise orthogonal matching pursuit (StOMP) \cite{13}, subspace pursuit (SP) \cite{22} and compressive sampling matching pursuit (CoSaMP) \cite{12'}.

Specifically, OMP algorithm is one of the most effective algorithm in sparse signals recovery due to its  implementation simplicity and competitive recovery performance. In the noiseless case, many efforts
 have been made to find out sufficient conditions based on RIC for OMP to exactly
reconstruct every $K$-sparse signal $x$ within K iterations.
Davenport and Wakin demonstrated that OMP can recover exactly the
$K$-sparse signal $x$  under $\delta_{K+1}<\frac{1}{3\sqrt{K}}$
\cite{16}. Since then, there are many papers to improve
 the condition in \cite{161}-\cite{16'}. Recently, Mo improved the sufficient condition to
   $\delta_{K+1}<\frac{1}{\sqrt{K+1}}$, and proved this condition is sharp \cite{16'}. In the presence of noise,
    Shen  and Li proved that OMP can exactly recover the support of the $K$-sparse signal $x$
    under $\delta_{K+1}<\frac{1}{\sqrt{K}+3}$ and some assumption  on the minimum magnitude of the nonzero elements of $x$ in \cite{17}. Later,  these sufficient conditions on RIC upper bound and minimum magnitude of the nonzero elements of $K$-sparse signal $x$ have been improved in \cite{18} and \cite{19}.

Wang,  Kwon and Shim introduced generalized orthogonal matching pursuit \cite{20},
which is a natural extension  of OMP. It is well known that OMP algorithm only selects one correct index at each iteration. However  the gOMP algorithm selects $N\ (N\geq1)$ indices which contain at least one correct index from the  support of $x$ in each iteration. Therefore the number of iteration for the gOMP algorithm is much smaller  comparing with OMP algorithm.  Wang,  Kwon and Shim obtained that  a sufficient condition
\begin{eqnarray*}
\delta_{NK}<\frac{\sqrt{N}}{\sqrt{K}+3\sqrt{N}}
 \end{eqnarray*}
 can ensure the reconstruction of  any $K$-sparse signals  \cite{20}. Later,  Satpathi et al. improved the sufficient condition to $\delta_{NK}<\frac{\sqrt{N}}{\sqrt{K}+2\sqrt{N}}$ in \cite{21}. They also refined the bound further to
 \begin{eqnarray*}
 \delta_{NK+1}<\frac{\sqrt{N}}{\sqrt{K}+\sqrt{N}},
 \end{eqnarray*}
 which  is $\delta_{K+1}<\frac{1}{\sqrt{K}+1}$ of OMP in \cite{162} and  \cite{16''} for $N=1$.

Motivated by the mentioned papers, we further investigate the recovery of any $K$-sparse signals by the gOMP.
In this paper, we demonstrate  that the condition
\begin{eqnarray*}
\delta_{NK+1}<\frac{1}{\sqrt{\frac{K}{N}+1}}
\end{eqnarray*}
is sufficient to perfectly  reconstruct any $K$-sparse signals via the gOMP in the noiseless case.
As $N=1$, the sufficient condition is $\delta_{K+1}<\frac{1}{\sqrt{K+1}}$ which is a sharp bound for OMP  \cite{16'}. Moreover, for any given $K\in \N^+$, we construct a matrix $A$ satisfying
\begin{eqnarray*}
\delta_{NK+1}=\frac{1}{\sqrt{\frac{K}{N}+1}}
\end{eqnarray*}
such that the gOMP may fail to recover  some $K$-sparse signal $x$. That is, the above bound $\delta_{NK+1}<\frac{1}{\sqrt{\frac{K}{N}+1}}$ is sharp for the gOMP.
In noise case,  we also show $\delta_{NK+1}<\frac{1}{\sqrt{\frac{K}{N}+1}}$ together with a minimum magnitude of the nonzero elements of the $K$-sparse signal $x$ can ensure the reconstruction of the support of $x$ via the gOMP.

The frame of the gOMP is listed in the table $1$.

\begin{center} TABLE 1

The gOMP algorithm
\end{center}
\hrule
\textbf{Input}\ \ \ \ \ \ \ \ \ measurements $y\in\R^m$, sensing matrix $A\in\R^{m\times n}$, sparse level $K$, number of

\ \ \ \ \ \ \ \ \ \ \ \ \ \ indices for each selection $N$ $(N\leq K \ \textmd{and}\  N\leq\frac{m}{K})$.\\
\textbf{Initialize}\ \ \ \ iteration count $k=0$, residual vector $r^0=y$, estimated support set $\Lambda^0=\varnothing$.

\hrule
\textbf{While}\ \ \ \ \ $\|r^k\|_2>\epsilon$  and $k<\min\{K, \frac{m}{K}\}$ do \ $k=k+1$.

\ \ \ \ \ \ \ \ \ \ \ \ \ \ \ (Identification\ step)\ \ \ \ Select indices set $T^k$ corresponding to $N$ largest

\ \ \ \ \ \ \ \ \ \ \ \ \ \ \  \ \ \ \ \ \ \ \ \ \ \ \ \ \ \ \ \ \ \ \ \ \ \ \ \ \ \ \ \ \ (in  magnitude) in $A^{'}r^{k-1}$.

\ \ \ \ \ \ \ \ \ \ \ \ \ \ \ (Augmentation step)\ \ \ $\Lambda^k=\Lambda^{k-1}\cup T^k$.

\ \ \ \ \ \ \ \ \ \ \ \ \ \  \ (Estimation step)\ \ \ \ \ \ \ \ $\hat{x}_{\Lambda^k}=\arg\min\limits_{u}\|y-A_{\Lambda^k}u\|_2$.

\ \ \ \ \ \ \ \ \ \ \ \ \ \  \ (Residual Update step)\ \ \ $r^k=y-A_{\Lambda^k}\hat{x}_{\Lambda^k}$.

\textbf{End}

\textbf{Output}\ \ \ \  the estimated signal $\hat{x}=\arg\min\limits_{u:supp(u)=\Lambda^k}\|y-Au\|_2$
\hrule
\ \ \\

The rest of the paper is organized as follows. In Section \ref{2},
we give some notations and prove  some basic lemmas that will be used. The main
results and their proofs are given in Section \ref{3}.

\section{Notations and preliminaries}\label{preliminaries}\label{2}
 Throughout this paper, let $\Gamma$ be an index set and $\Gamma^c$ be the complementary set of $\Gamma$.  The standard notation $\|x\|_{\infty}=\max\limits_{i=1,2,\cdots,n}|x_i|$ denotes the $l_\infty-$norm of the vector $x\in\R^n$. $x_\Gamma\in\R^{|\Gamma|}$ denotes the vector composed of components of $x\in \R^n$ indexed by $i\in\Gamma$, i.e., $(x_\Gamma)_i=x_i \ (i\in\Gamma)$. Define $\tilde{x}_\Gamma\in\R^{n}$ by
 \begin{eqnarray*}
(\tilde{x}_\Gamma)_i= \left\{
   \begin{array}{ll}
     x_i, & \hbox{$i\in \Gamma$;} \\
     0, & \hbox{others,}
   \end{array}
 \right.
 \end{eqnarray*}
where $i=1,2,\cdots,n.$
Denote by $A_\Gamma$ a submatrix of $A$ corresponding to $\Gamma$ which consists of  all columns with index $i\in \Gamma$ of $A$ and the usual inner product of $\R^n$  with $\langle\cdot,\cdot\rangle$.
Let $e_i\in\R^n$ be the $i$-th coordinate unit vector.

Let $\alpha_N^{k+1}$ be the $N$-th largest correlation in magnitude between $r^k$ and $A_i\ (i\in(T\cup\Lambda^k)^c)$ and $\beta_1^{k+1}$ be the largest correlation in magnitude between $r^k$ and $A_i\ (i\in(T-\Lambda^k))$ in the $(k+1)$-th iteration of the gOMP algorithm. Let $W_{k+1}\subseteq(T\cup\Lambda^k)^c$  be the set of $N$ indices which correspond to
$N$ largest correlation in magnitude between $r^k$ and $A_i\ (i\in(T\cup\Lambda^k)^c)$.

$A_{\Lambda^k}^\dagger$ represents the pseudo-inverse of $A_{\Lambda^k}$ when $A_{\Lambda^k}$ is full column rank ($|\Lambda^k|\leq m$), i.e., $A_{\Lambda^k}^\dagger=(A_{\Lambda^k}^{'}A_{\Lambda^k})^{-1}A_{\Lambda^k}^{'}$. Moreover, $P_{\Lambda^k}=A_{\Lambda^k}A_{\Lambda^k}^\dagger$ and $P^\bot_{\Lambda^k}=I-P_{\Lambda^k}$ denote two orthogonal projection operators which  project a given vector orthogonally onto the spanned space by all columns of $A_{\Lambda^k}$ and onto its orthogonal complement respectively.

First, we recall the following lemma, that is, the monotonicity of the restricted isometry constant in \cite{8}, \cite{22}.
 \begin{lem}\label{lem1}For any $K_1\leq K_2$,
 if the sensing matrix $A$ satisfies the RIP of order $K_2$, then $\delta_{K_1}\leq\delta_{K_2}$.
 \end{lem}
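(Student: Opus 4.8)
The plan is to reduce the claim to the elementary observation that the family of $K_1$-sparse signals is contained in the family of $K_2$-sparse signals whenever $K_1\leq K_2$. First I would fix an arbitrary $K_1$-sparse vector $x$, so that $|supp(x)|\leq K_1\leq K_2$; hence $x$ is in particular $K_2$-sparse, and the hypothesis that $A$ satisfies the RIP of order $K_2$ gives
\[
(1-\delta_{K_2})\|x\|_2^2\leq\|Ax\|_2^2\leq(1+\delta_{K_2})\|x\|_2^2.
\]
Since $x$ was an arbitrary $K_1$-sparse signal, this shows that $\delta_{K_2}$ is one admissible constant for which the defining inequality \eqref{p2} of the RIP of order $K_1$ holds for all $K_1$-sparse signals; in particular $A$ does satisfy the RIP of order $K_1$, so that $\delta_{K_1}$ is well defined and the statement is not vacuous.

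It then remains only to invoke the definition of $\delta_{K_1}$ as the \emph{smallest} constant for which \eqref{p2} is valid for every $K_1$-sparse signal: by the previous paragraph $\delta_{K_2}$ lies in the set of such constants, so its infimum $\delta_{K_1}$ satisfies $\delta_{K_1}\leq\delta_{K_2}$, which is exactly the assertion. I do not anticipate any genuine obstacle here; the only point to keep straight is the logical direction, namely that a larger sparsity level imposes the isometry inequality on a \emph{larger} set of vectors and therefore can only force the optimal constant to grow, so that monotonicity runs in the expected way $\delta_{K_1}\leq\delta_{K_2}$.
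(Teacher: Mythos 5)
Your argument is correct and is the standard one: since every $K_1$-sparse vector is also $K_2$-sparse, the constant $\delta_{K_2}$ is admissible in \eqref{p2} at order $K_1$, and minimality of $\delta_{K_1}$ gives $\delta_{K_1}\leq\delta_{K_2}$. The paper itself supplies no proof of this lemma (it merely cites \cite{8} and \cite{22}), so there is nothing to contrast with; your write-up fills that gap correctly, and the remark that $A$ automatically satisfies the RIP of order $K_1$ is a nice touch of completeness.
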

 Next, we show  the  main lemma that is  very useful during  our analysis.
 \begin{lem}\label{lem2}
 For any $S,\ C>0$, let $t=\pm\frac{\sqrt{S+1}-1}{\sqrt{S}}$ and
 \begin{eqnarray}
 t_i=\left\{
       \begin{array}{ll}
         -\frac{C}{2}(1-t^2), & \hbox{$\langle Ax, Ae_i\rangle\geq 0$;} \\
         +\frac{C}{2}(1-t^2), & \hbox{$\langle Ax, Ae_i\rangle<0$,}
       \end{array}
     \right.
\end{eqnarray}
  where $i\in W\subseteq\{1,2,\cdots,n\}$ that is a nonempty subset. Then we have $t^2<1$ and
 \begin{eqnarray}\label{e1}
 \|A(x+\sum_{i\in W}t_ie_i)\|_2^2-\|A(t^2x-\sum_{i\in W}t_ie_i)\|_2^2=
 (1-t^4)\left(\langle Ax,Ax\rangle-C\sum_{i\in W}|\langle Ax, Ae_i\rangle|\right).\nonumber
 \end{eqnarray}
 \end{lem}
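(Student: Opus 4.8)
The plan is to prove the two assertions by direct computation; nothing deep is needed, so the whole argument is essentially algebra once the right expansions are set up.

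First I would verify $t^2<1$. Writing $t^2=\dfrac{(\sqrt{S+1}-1)^2}{S}=1-\dfrac{2(\sqrt{S+1}-1)}{S}$ and using that $S>0$ forces $\sqrt{S+1}>1$, the subtracted term is strictly positive, so $0\le t^2<1$. In particular $1-t^2>0$ and $1-t^4=(1-t^2)(1+t^2)>0$, which is what makes the right-hand side of the identity behave the way it is later used. The sign of $t$ plays no role, either here or in the identity.

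Next, for the identity itself, I would set $z:=\sum_{i\in W}t_ie_i$ and expand both squared norms by bilinearity of $\langle\cdot,\cdot\rangle$:
\begin{eqnarray*}
\|A(x+z)\|_2^2&=&\langle Ax,Ax\rangle+2\langle Ax,Az\rangle+\|Az\|_2^2,\\
\|A(t^2x-z)\|_2^2&=&t^4\langle Ax,Ax\rangle-2t^2\langle Ax,Az\rangle+\|Az\|_2^2.
\end{eqnarray*}
Subtracting, the two $\|Az\|_2^2$ terms cancel and I am left with $(1-t^4)\langle Ax,Ax\rangle+2(1+t^2)\langle Ax,Az\rangle$. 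So everything reduces to evaluating the cross term $\langle Ax,Az\rangle=\sum_{i\in W}t_i\langle Ax,Ae_i\rangle$.

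The one place the precise definition of $t_i$ enters is here: the case split is arranged exactly so that $t_i\langle Ax,Ae_i\rangle=-\tfrac{C}{2}(1-t^2)\,|\langle Ax,Ae_i\rangle|$ in \emph{both} cases --- when $\langle Ax,Ae_i\rangle\ge 0$ the factor $t_i$ is negative and the absolute value is the inner product itself, while when $\langle Ax,Ae_i\rangle<0$ the two sign changes cancel. Summing over $i\in W$ gives $\langle Ax,Az\rangle=-\tfrac{C}{2}(1-t^2)\sum_{i\in W}|\langle Ax,Ae_i\rangle|$, hence $2(1+t^2)\langle Ax,Az\rangle=-(1-t^4)C\sum_{i\in W}|\langle Ax,Ae_i\rangle|$, and substituting this into the expression above yields precisely the claimed identity. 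The only thing to watch for is the sign bookkeeping that matches $t_i$ to the absolute values on the right; there is no genuine obstacle, since the statement is in essence a bookkeeping identity recording that this particular perturbation $z$ turns the difference of the two squared norms into $(1-t^4)$ times the quantity $\langle Ax,Ax\rangle-C\sum_{i\in W}|\langle Ax,Ae_i\rangle|$ that is later estimated via the RIP.
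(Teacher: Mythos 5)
Your proof is correct and follows essentially the same route as the paper's: verify $t^2<1$ by direct algebra, expand both squared norms so the quadratic terms in $z=\sum_{i\in W}t_ie_i$ cancel, and use the sign convention in the definition of $t_i$ to turn the cross term into $-\tfrac{C}{2}(1-t^2)\sum_{i\in W}|\langle Ax,Ae_i\rangle|$. The only cosmetic difference is that you group the perturbation into a single vector $z$ and cancel $\|Az\|_2^2$ wholesale, where the paper writes out the double sum over $i,j\in W$ explicitly.
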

 \begin{proof}
 For $t=\pm\frac{\sqrt{S+1}-1}{\sqrt{S}}$, we have that
 \begin{eqnarray}\label{e2}
 t^2=\frac{(\sqrt{S+1}-1)^2}{S}=\frac{\sqrt{S+1}-1}{\sqrt{S+1}+1}<1.\nonumber
 \end{eqnarray}
  The result in the lemma is established by the following chain of equalities and the definition of $t_i\ (i\in W)$:
 \begin{eqnarray}
 &&\|A(x+\sum_{i\in W}t_ie_i)\|_2^2-\|A(t^2x-\sum_{i\in W}t_ie_i)\|_2^2\nonumber\\
 &&=\langle Ax,Ax\rangle+2\sum_{i\in W}t_i\langle  Ax,Ae_i\rangle+2\sum_{i,j\in W,i\neq j}t_it_j\langle  Ae_i,Ae_j\rangle+\sum_{i\in W}t_i^2\langle  Ae_i,Ae_i\rangle\nonumber\\
 &&\ \ -\left(t^4\langle Ax,Ax\rangle-2t^2\sum_{i\in W}t_i\langle  Ax,Ae_i\rangle+2\sum_{i,j\in W,i\neq j}t_it_j\langle Ae_i,Ae_j\rangle+\sum_{i\in W}t_i^2\langle  Ae_i,Ae_i\rangle\right)\nonumber\\
 &&=(1-t^4)\langle Ax,Ax\rangle+2(1+t^2)\sum_{i\in W}t_i\langle  Ax,Ae_i\rangle\nonumber\\
 &&=(1-t^4)\left(\langle Ax,Ax\rangle-\frac{2}{1-t^2}\sum_{i\in W}|t_i||\langle  Ax,Ae_i\rangle|\right)\nonumber\\
 &&=(1-t^4)\left(\langle Ax,Ax\rangle-\frac{2}{1-t^2}(1-t^2)\frac{C}{2}\sum_{i\in W}|\langle  Ax,Ae_i\rangle|\right)\nonumber\\
 &&=(1-t^4)\left(\langle Ax,Ax\rangle-C\sum_{i\in W}|\langle  Ax,Ae_i\rangle|\right).\nonumber
 \end{eqnarray}
We have already completed the proof of the Lemma \ref{lem2}.
 \end{proof}
\textbf{Remark}\ \textbf{1.} The Lemma \ref{lem2} is a generalization of Lemma $\textrm{II}.1$ in \cite{16'}. The main idea of its proof is from the idea of Lemma $\textrm{II}.1$.

\textbf{Remark}\ \textbf{2.}
If $x$ is replaced by $cx$ with non-zero scalar $c$, the results of Lemma \ref{lem2} keep unchanged.

\section{ Main results }\label{3}

\subsection{Noiseless case}
\ \

It is well known that if at least one index of $N$ indices selected  is correct in every iteration, the gOMP makes a success, i.e., in each iteration, there exists  $\beta^{k}_1>\alpha_N^k\ (1\leq k\leq K)$.
The following theorems show a sufficient condition  guarantees the gOMP algorithm success. The proof of these theorems mainly uses Lemmas \ref{lem1} and \ref{lem2}.  By Remark $2$ we assume $\|x\|_2=1$ in the proof of Theorem \ref{the1} and $\|\tilde{\omega}_{T\cup\Lambda^k}\|_2=1$ in the proof of Theorem \ref{the2} .
\begin{thm}\label{the1}
Suppose  $x$ is a $K$-sparse signal and the restricted isometry constant $\delta_{K+N}$  of the sensing matrix $A$ satisfies
\begin{eqnarray}\label{e34}
\delta_{K+N}<\frac{1}{\sqrt{\frac{K}{N}+1}}.
\end{eqnarray}
Then the gOMP algorithm makes a success in the first iteration.
\end{thm}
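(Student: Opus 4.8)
The plan is to argue by contradiction. In the first iteration $\Lambda^0=\varnothing$ and $r^0=y=Ax$, and a ``success'' means that at least one of the $N$ selected indices lies in $T=\mathrm{supp}(x)$, i.e.\ $\beta_1^1>\alpha_N^1$. So I would assume the failure $\beta_1^1\leq\alpha_N^1$. Since $\alpha_N^1$ is the $N$-th largest value of $|\langle Ax,A_i\rangle|$ over $i\in T^c$, the set $W:=W_1\subseteq T^c$ of $N$ selected ``wrong'' indices then satisfies $|\langle Ax,A_i\rangle|\geq\alpha_N^1\geq\beta_1^1$ for every $i\in W$.

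Next I would record two elementary estimates tying the correlations to $\langle Ax,Ax\rangle$. Writing $Ax=\sum_{j\in T}x_jA_j$ gives $\langle Ax,Ax\rangle=\sum_{j\in T}x_j\langle Ax,A_j\rangle\leq\|x\|_1\,\beta_1^1\leq\sqrt{K}\,\|x\|_2\,\beta_1^1=\sqrt{K}\,\beta_1^1$, using $\|x\|_2=1$ (justified by Remark~2) and $\|x\|_1\leq\sqrt{K}\|x\|_2$; note also $\langle Ax,Ax\rangle=\|Ax\|_2^2>0$ by Lemma~\ref{lem1} and \eqref{e34}. On the other hand, from the failure assumption, $\sum_{i\in W}|\langle Ax,A_i\rangle|\geq N\beta_1^1$. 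Combining these two inequalities yields $\langle Ax,Ax\rangle\leq\frac{\sqrt{K}}{N}\sum_{i\in W}|\langle Ax,A_i\rangle|$.

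Now I would invoke Lemma~\ref{lem2} with $S=K/N$ and $C=\sqrt{K}/N$, which fixes $t=\pm\frac{\sqrt{K/N+1}-1}{\sqrt{K/N}}$ and the corresponding $t_i$ for $i\in W$. Both vectors $x+\sum_{i\in W}t_ie_i$ and $t^2x-\sum_{i\in W}t_ie_i$ are supported in $T\cup W$, a set of at most $K+N$ indices (as $W\subseteq T^c$, $|W|=N$, $|T|\leq K$), and their two non-zero blocks occupy disjoint coordinates, so $\|x+\sum_{i\in W}t_ie_i\|_2^2=1+a$ and $\|t^2x-\sum_{i\in W}t_ie_i\|_2^2=t^4+a$ with $a:=\sum_{i\in W}t_i^2=\frac{NC^2}{4}(1-t^2)^2=\frac{K}{4N}(1-t^2)^2$. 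Applying the RIP of order $K+N$ (a lower bound on the first term, an upper bound on the subtracted one) makes the left-hand side of the identity in Lemma~\ref{lem2} at least $(1-\delta_{K+N})(1+a)-(1+\delta_{K+N})(t^4+a)=(1-t^4)-\delta_{K+N}(1+t^4+2a)$, whereas the choice $C=\sqrt{K}/N$ together with the estimate of the previous paragraph makes its right-hand side $(1-t^4)\big(\langle Ax,Ax\rangle-C\sum_{i\in W}|\langle Ax,A_i\rangle|\big)\leq 0$. Hence $(1-t^4)-\delta_{K+N}(1+t^4+2a)\leq 0$, i.e.\ $\delta_{K+N}\geq\frac{1-t^4}{1+t^4+2a}$.

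Finally I would simplify this lower bound: putting $p=\sqrt{K/N+1}$, so that $t^2=\frac{p-1}{p+1}$ and $K/N=p^2-1$, a direct computation gives $1-t^4=\frac{4p}{(p+1)^2}$ and $1+t^4+2a=\frac{4p^2}{(p+1)^2}$, whence $\frac{1-t^4}{1+t^4+2a}=\frac1p=\frac{1}{\sqrt{K/N+1}}$. Thus the failure assumption would force $\delta_{K+N}\geq\frac{1}{\sqrt{K/N+1}}$, contradicting \eqref{e34}; therefore $\beta_1^1>\alpha_N^1$ and the gOMP succeeds in the first iteration. The two places needing care are the choice of the free scaling $C=\sqrt{K}/N$ in Lemma~\ref{lem2} — tuned precisely so that the right-hand side of the identity becomes nonpositive (and, being minimal among admissible $C$, so that the resulting lower bound on $\delta_{K+N}$ is largest) — and the closing algebraic simplification; the remaining steps are routine applications of the RIP and of $\|x\|_1\leq\sqrt{K}\|x\|_2$.
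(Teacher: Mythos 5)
Your proposal is correct and follows essentially the same route as the paper's proof: the same two correlation estimates, the same invocation of Lemma \ref{lem2} with $S=K/N$, $C=\sqrt{K}/N$, the same RIP bound on the disjointly supported vectors, and the same closing algebra (note $a=\sum_{i\in W}t_i^2=t^2$, so your $1+t^4+2a=(1+t^2)^2$ matches the paper's expression). The only cosmetic difference is that you phrase it as a contradiction, whereas the paper directly shows $(1-t^4)\sqrt{K}(\beta_1^1-\alpha_N^1)>0$.
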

\textbf{Remark}\ \textbf{3.}
In \cite{20}, authors proved that
\begin{eqnarray*}
\delta_{K+N} <\frac{\sqrt{N}}{\sqrt{K}+\sqrt{N}}
\end{eqnarray*}
is sufficient to make a success in the first iteration of the gOMP.
It is clear that
\begin{eqnarray*}
\delta_{K+N} <\frac{\sqrt{N}}{\sqrt{K}+\sqrt{N}}<\frac{1}{\sqrt{\frac{K}{N}+1}},
\end{eqnarray*}
i.e., the sufficient condition \eqref{e34} is weaker than that in \cite{20}.

\begin{proof}
In the first iteration, by the definition of $\alpha_N^1$, it satisfies
\begin{eqnarray}\label{e4}
\alpha_N^1&=&\min\{|\langle Ae_i,Ax\rangle||i\in W_1\}\nonumber\\
&\leq&\frac{\sum_{i\in W_1}|\langle Ae_i,Ax\rangle|}{N},
\end{eqnarray}
where $W_1\subseteq T^c$.

For  $\beta_1^1$ which  is the  largest correlation in magnitude in $A_T'Ax$, we have
\begin{eqnarray}\label{e5}
\langle Ax, Ax \rangle&=&\langle A\sum_{i\in T}x_ie_i, Ax \rangle \nonumber\\
&=&\sum_{i\in T}x_i\langle Ae_i, Ax \rangle\nonumber\\
&\leq &\sum_{i\in T}|x_i||\langle Ae_i, Ax \rangle|\nonumber\\
&\leq&\beta_1^1\|x\|_1\nonumber\\
&\leq&\beta_1^1\sqrt{K}\|x\|_2\nonumber\\
&\leq&\beta_1^1\sqrt{K}.
\end{eqnarray}

Let $t=-\frac{\sqrt{\frac{K}{N}+1}-1}{\sqrt{\frac{K}{N}}}$
and
\begin{eqnarray*}
 t_i=\left\{
       \begin{array}{ll}
         -\frac{\sqrt{K}}{2N}(1-t^2), & \hbox{$\langle Ax, Ae_i\rangle\geq 0$;} \\
         +\frac{\sqrt{K}}{2N}(1-t^2), & \hbox{$\langle Ax, Ae_i\rangle<0$,}
       \end{array}
     \right.
\end{eqnarray*}
where $i\in W_1\subseteq T^c$ with $|W_1|=N$, then we have that
\begin{eqnarray*}
t^2=\frac{\sqrt{\frac{K}{N}+1}-1}{\sqrt{\frac{K}{N}+1}+1}<1
\end{eqnarray*}
and
 \begin{eqnarray}\label{e33}
\sum_{i\in W_1}t_i^2&=&\left(\frac{\sqrt{K}}{2N}(1-t^2)\right)^2N\nonumber\\
&=&\frac{K}{4N}(1-t^2)^2\nonumber\\
&=&\frac{K}{4N}\left(1-\frac{\sqrt{\frac{K}{N}+1}-1}{\sqrt{\frac{K}{N}+1}+1}\right)^2\nonumber\\
&=&\frac{K}{N}\frac{1}{\left(\sqrt{\frac{K}{N}+1}+1\right)^2}\nonumber\\
&=&\frac{\sqrt{\frac{K}{N}+1}-1}{\sqrt{\frac{K}{N}+1}+1}=t^2.
 \end{eqnarray}
 By \eqref{e4}, \eqref{e5} and Lemma \ref{lem2}, we obtain
\begin{eqnarray}\label{e6}
(1-t^4)\sqrt{K}(\beta_1^1-\alpha_N^1)&\geq&(1-t^4)\left(\langle Ax, Ax \rangle-\sqrt{K}\frac{\sum_{i\in W_1}|\langle Ae_i,Ax\rangle|}{N}\right)\nonumber\\
&=&\|A(x+\sum_{i\in W_1}t_ie_i)\|_2^2-\|A(t^2x-\sum_{i\in W_1}t_ie_i)\|_2^2.
\end{eqnarray}
Because the sensing matrix $A$ satisfies  the RIP of order $K+N$
with $\delta_{K+N}$, $\|x\|_2=1$ with $supp(x)\subseteq T$, $W_1\subseteq T^c$, it follows from \eqref{e33} that
\begin{eqnarray*}\label{e7}
&&\|A(x+\sum_{i\in W_1}t_ie_i)\|_2^2-\|A(t^2x-\sum_{i\in W_1}t_ie_i)\|_2^2\nonumber\\
&&\geq(1-\delta_{K+N})\left(\|x+\sum_{i\in W_1}t_ie_i\|_2^2\right)-(1+\delta_{K+N})\left(\|t^2x-\sum_{i\in W_1}t_ie_i\|_2^2\right)\nonumber\\
&&=(1-\delta_{K+N})\left(\|x\|_2^2+\sum_{i\in W_1}t_i^2\right)-(1+\delta_{K+N})\left(t^4\|x\|_2^2+\sum_{i\in W_1}t_i^2\right)\nonumber\\
&&=(1-\delta_{K+N})(1+t^2)-(1+\delta_{K+N})(t^4+t^2)\nonumber\\
&&=(1-t^4)-\delta_{K+N}(1+t^2)^2\nonumber\\
&&=(1+t^2)^2\left(\frac{1-t^2}{1+t^2}-\delta_{K+N}\right).
\end{eqnarray*}
It follows from the definition of $t$ that
\begin{eqnarray*}\label{e8}
\frac{1-t^2}{1+t^2}&=&\frac{1-\frac{\sqrt{\frac{K}{N}+1}-1}{\sqrt{\frac{K}{N}+1}+1}}{1+\frac{\sqrt{\frac{K}{N}+1}-1}{\sqrt{\frac{K}{N}+1}+1}}\nonumber\\
&=&\frac{1}{\sqrt{\frac{K}{N}+1}}.
\end{eqnarray*}
Therefore by the condition $\delta_{K+N}<\frac{1}{\sqrt{\frac{K}{N}+1}}$, we obtain
\begin{eqnarray*}
 (1-t^4)\sqrt{K}(\beta_1^1-\alpha_N^1)
&\geq&(1+t^2)^2\left(\frac{1-t^2}{1+t^2}-\delta_{K+N}\right)\nonumber\\
&\geq&(1+t^2)^2\left(\frac{1}{\sqrt{\frac{K}{N}+1}}-\delta_{K+N}\right)\nonumber\\
&>&0,
\end{eqnarray*}
i.e., $\beta_1^1>\alpha_N^1$ which represents the gOMP selects  at least one index from the support $T$.

As mentioned, if $\delta_{K+N}<\frac{1}{\sqrt{\frac{K}{N}+1}}$, then the gOMP algorithm makes a success in the first iteration.
\end{proof}
\begin{thm}\label{the2}
If the gOMP algorithm  has performed $k$ iterations successfully, where $1\leq k<K$. And the sensing matrix $A$ satisfies the RIP of order $NK+1$ with RIC $\delta_{NK+1}$ fulfilling
\begin{eqnarray*}
\delta_{NK+1}<\frac{1}{\sqrt{\frac{K}{N}+1}}.
\end{eqnarray*}
Then in the $(k+1)$-th iteration, the gOMP will make a success.
\end{thm}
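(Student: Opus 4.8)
The plan is to re-run the argument of Theorem~\ref{the1}, with the $K$-sparse signal $x$ replaced by a vector $\omega$ encoding the residual after the $k$ successful iterations. Put $T':=T\setminus\Lambda^k$. Since $r^k=P^\bot_{\Lambda^k}y=P^\bot_{\Lambda^k}Ax$ and $P^\bot_{\Lambda^k}A\tilde x_{T\cap\Lambda^k}=0$, we may write $r^k=A\tilde x_{T'}-P_{\Lambda^k}A\tilde x_{T'}=A\omega$, where $\omega\in\R^n$ is supported on $T'\cup\Lambda^k\subseteq T\cup\Lambda^k$ with $\omega_{T'}=x_{T'}$ and $\omega_{\Lambda^k}=-A_{\Lambda^k}^\dagger A_{T'}x_{T'}$ (here $A_{\Lambda^k}^\dagger$ exists because $|\Lambda^k|\leq kN<NK$ and $\delta_{NK}\leq\delta_{NK+1}<1$ by Lemma~\ref{lem1}). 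As $r^k$ scales linearly with $x$ and Lemma~\ref{lem2} is scale-invariant (Remark~2), I may assume $\|\tilde\omega_{T\cup\Lambda^k}\|_2=\|\omega\|_2=1$, whence $\|x_{T'}\|_2\leq 1$. Two bookkeeping facts will be used repeatedly: since every successful iteration adds a \emph{new} index of $T$, after $k$ of them $|T\cap\Lambda^k|\geq k$, so $|T'|\leq K-k$ and $|\Lambda^k|\leq kN$; and $\langle A_i,r^k\rangle=0$ for all $i\in\Lambda^k$, since $r^k$ is orthogonal to the column space of $A_{\Lambda^k}$.

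With $\omega$ playing the role of $x$, the two basic inequalities from the first-iteration proof transfer almost verbatim. Expanding $\langle A\omega,A\omega\rangle=\sum_i\omega_i\langle Ae_i,A\omega\rangle$ and discarding the indices in $\Lambda^k$ (where $\langle Ae_i,A\omega\rangle=\langle A_i,r^k\rangle=0$) leaves $\langle A\omega,A\omega\rangle=\sum_{i\in T'}x_i\langle A_i,r^k\rangle\leq\beta_1^{k+1}\|x_{T'}\|_1\leq\beta_1^{k+1}\sqrt{K-k}\,\|x_{T'}\|_2\leq\beta_1^{k+1}\sqrt{K-k}$, the analogue of \eqref{e5}; and, as in \eqref{e4}, $\alpha_N^{k+1}\leq\frac1N\sum_{i\in W_{k+1}}|\langle Ae_i,A\omega\rangle|$. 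I would then invoke Lemma~\ref{lem2} with $x\mapsto\omega$, $W\mapsto W_{k+1}$, $S=\tfrac{K-k}{N}$ and $C=\tfrac{\sqrt{K-k}}{N}$, so that $t=-\tfrac{\sqrt{(K-k)/N+1}-1}{\sqrt{(K-k)/N}}$, checking (exactly as in \eqref{e33}) that $\sum_{i\in W_{k+1}}t_i^2=t^2$. Lemma~\ref{lem2} then equates $\|A(\omega+\sum_{i\in W_{k+1}}t_ie_i)\|_2^2-\|A(t^2\omega-\sum_{i\in W_{k+1}}t_ie_i)\|_2^2$ with $(1-t^4)\big(\langle A\omega,A\omega\rangle-C\sum_{i\in W_{k+1}}|\langle A\omega,Ae_i\rangle|\big)$, and the two estimates above bound this last quantity by $(1-t^4)\sqrt{K-k}\,(\beta_1^{k+1}-\alpha_N^{k+1})$.

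It remains to bound $\|A(\omega+\sum_{i\in W_{k+1}}t_ie_i)\|_2^2-\|A(t^2\omega-\sum_{i\in W_{k+1}}t_ie_i)\|_2^2$ from below via the RIP, and this is the one step needing care. Both vectors are supported on $T'\cup\Lambda^k\cup W_{k+1}$, of cardinality at most $(K-k)+kN+N$, and since $(N-1)(K-k-1)\geq 0$ one has $(K-k)+kN+N\leq NK+1$; hence both vectors are $(NK+1)$-sparse and $\delta_{NK+1}$ is the relevant constant. As $\omega$ and $W_{k+1}$ have disjoint supports and $\|\omega\|_2=1$, one gets $\|\omega+\sum t_ie_i\|_2^2=1+\sum t_i^2$ and $\|t^2\omega-\sum t_ie_i\|_2^2=t^4+\sum t_i^2$; substituting $\sum_{i\in W_{k+1}}t_i^2=t^2$ exactly as in Theorem~\ref{the1}, the difference is at least $(1+t^2)^2\big(\tfrac{1-t^2}{1+t^2}-\delta_{NK+1}\big)=(1+t^2)^2\big(\tfrac{1}{\sqrt{(K-k)/N+1}}-\delta_{NK+1}\big)$. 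Since $k\geq 1$, $\tfrac{1}{\sqrt{(K-k)/N+1}}\geq\tfrac{1}{\sqrt{K/N+1}}>\delta_{NK+1}$ by hypothesis, so this lower bound is strictly positive; combined with the second paragraph and $(1-t^4)\sqrt{K-k}>0$ it forces $\beta_1^{k+1}>\alpha_N^{k+1}$. Because the correlations on $\Lambda^k$ vanish, this says the $N$ largest entries of $A'r^k$ include at least one index of $T\setminus\Lambda^k$, i.e.\ the $(k+1)$-th iteration of the gOMP succeeds. The only genuine obstacle is the bookkeeping of the first paragraph together with the sparsity count $(K-k)+kN+N\leq NK+1$; the rest is the computation of Theorem~\ref{the1} with $K$ replaced by $K-k$, closed by the monotonicity $\delta_{NK+1}<\tfrac{1}{\sqrt{K/N+1}}\leq\tfrac{1}{\sqrt{(K-k)/N+1}}$.
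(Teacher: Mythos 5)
Your proof is correct and follows essentially the same route as the paper's: the same decomposition $r^k=A\omega$ with $\omega$ supported on $T\cup\Lambda^k$, the same two correlation bounds for $\alpha_N^{k+1}$ and $\beta_1^{k+1}$, the same application of Lemma \ref{lem2} plus the RIP, and the same sparsity count $Nk+(K-l)+N\leq NK+1$. The only deviation is that you run the estimates with $K-k$ in place of $K$ (so $\|x_{T-\Lambda^k}\|_1\leq\sqrt{K-k}$ and $t$ is chosen from $S=\frac{K-k}{N}$), which gives the marginally sharper intermediate threshold $\frac{1}{\sqrt{(K-k)/N+1}}$ before invoking monotonicity, whereas the paper keeps $\sqrt{K}$ and $S=\frac{K}{N}$ throughout.
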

\begin{proof}
For the gOMP algorithm, $r^k=P^\perp_{\Lambda^k}y$ is orthogonal to each column of $A_{\Lambda^k}$ then
\begin{eqnarray*}
r^k&=&P^\perp_{\Lambda^k}y\\
&=&P^\perp_{\Lambda^k}A_Tx_T\\
&=&P^\perp_{\Lambda^k}(A_{T-{\Lambda^k}}x_{T-{\Lambda^k}}+A_{T\cap \Lambda^k}x_{T\cap\Lambda^k})\\
&=& P^\perp_{\Lambda^k}A_{T-{\Lambda^k}}x_{T-{\Lambda^k}}\\
&=&A_{T-{\Lambda^k}}x_{T-{\Lambda^k}}-P_{\Lambda^k}A_{T-{\Lambda^k}}x_{T-{\Lambda^k}}\\
&=&A_{T-{\Lambda^k}}x_{T-{\Lambda^k}}-A_{\Lambda^k}z_{\Lambda^k}\\
&=&A_{T\cup \Lambda^k}\omega_{T\cup \Lambda^k},
\end{eqnarray*}
where we used the fact that $P_{\Lambda^k}A_{T-{\Lambda^k}}x_{T-{\Lambda^k}}\in span(A_{\Lambda^k})$, so $P_{\Lambda^k}A_{T-{\Lambda^k}}x_{T-{\Lambda^k}}$ can be written as $A_{\Lambda^k}z_{\Lambda^k}$ for some $z_{\Lambda^k}\in \R^{|\Lambda^k|}$ and $\omega_{T\cup\Lambda^k}$ is given by
\begin{eqnarray*}
\omega_{T\cup \Lambda^k}=\left(
                              \begin{array}{c}
                                x_{T-\Lambda^k} \\
                                -z_{\Lambda^k}\\
                              \end{array}
                            \right).
\end{eqnarray*}

By the definition of $\alpha_N^{k+1}$ and $\beta_1^{k+1}$, we have that
\begin{eqnarray}\label{e9}
\alpha_N^{k+1}&=&\min\{|\langle Ae_i, r^k\rangle||i\in W_{k+1},W_{k+1} \subseteq (T\cup \Lambda^k)^c\}\nonumber\\
&\leq&\frac{\sum_{i\in W_{k+1}}|\langle Ae_i,r^k\rangle|}{N}\nonumber\\
&=&\frac{\sum_{i\in W_{k+1}}|\langle Ae_i, A_{T\cup\Lambda^k}\omega_{T\cup\Lambda^k}\rangle|}{N}\nonumber\\
&=&\frac{\sum_{i\in W_{k+1}}|\langle Ae_i, A\widetilde{\omega}_{T\cup\Lambda^k}\rangle|}{N}
\end{eqnarray}
and
\begin{eqnarray}\label{e10'}
\beta_1^{k+1}&=&\|A^{'}_{T-\Lambda^k}r^k\|_{\infty}\nonumber\\
&=&\|[A_{T-\Lambda^k}\ A_{T\cap\Lambda^k}]^{'}A_{T\cup\Lambda^k}\omega_{T\cup\Lambda^k}\|_{\infty}\nonumber\\
&=&\|A^{'}_TA_{T\cup\Lambda^k}\omega_{T\cup\Lambda^k}\|_{\infty}\nonumber\\
&=&\|[A_T\ A_{\Lambda^k-T}]^{'}A_{T\cup\Lambda^k}\omega_{T\cup\Lambda^k}\|_{\infty}\nonumber\\
&=&\|A^{'}_{T\cup \Lambda^k}A_{T\cup\Lambda^k}\omega_{T\cup\Lambda^k}\|_{\infty}.
\end{eqnarray}
Notice the fact that
\begin{eqnarray}\label{e16}
&&\|A^{'}_TA_{T\cup\Lambda^k}\omega_{T\cup\Lambda^k}\|_{\infty}\nonumber\\
&&\geq\frac{1}{\sqrt{K}}\|A^{'}_TA_{T\cup\Lambda^k}\omega_{T\cup\Lambda^k}\|_2\nonumber\\
&&=\frac{1}{\sqrt{K}}\|A_{T\cup\Lambda^k}^{'}A_{T\cup\Lambda^k}\omega_{T\cup\Lambda^k}\|_2.
\end{eqnarray}
By the hypothesis of  $\|\omega_{T\cup\Lambda^k}\|_2=1$, \eqref{e10'} and \eqref{e16},  it follows that
\begin{eqnarray}\label{e11}
\langle A\widetilde{\omega}_{T\cup \Lambda^k}, A\widetilde{\omega}_{T\cup \Lambda^k}\rangle&=&\langle A_{T\cup \Lambda^k}\omega_{T\cup \Lambda^k}, A_{T\cup \Lambda^k}\omega_{T\cup \Lambda^k}\rangle\nonumber\\
&=&\langle A_{T\cup \Lambda^k}^{'}A_{T\cup \Lambda^k}\omega_{T\cup \Lambda^k}, \omega_{T\cup \Lambda^k}\rangle \nonumber\\
&\leq& \|A_{T\cup \Lambda^k}^{'}A_{T\cup \Lambda^k}\omega_{T\cup \Lambda^k}\|_2\|\omega_{T\cup \Lambda^k}\|_2\nonumber\\
&\leq&\sqrt{K}\beta_1^{k+1}\|\omega_{T\cup \Lambda^k}\|_2\nonumber\\
&=&\sqrt{K}\beta_1^{k+1}.
\end{eqnarray}

As in the proof of Theorem \ref{the1},
let $t=-\frac{\sqrt{\frac{K}{N}+1}-1}{\sqrt{\frac{K}{N}}}$ and
\begin{eqnarray*}
 t_i=\left\{
       \begin{array}{ll}
         -\frac{\sqrt{K}}{2N}(1-t^2), & \hbox{$\langle Ax, Ae_i\rangle\geq 0$;} \\
         +\frac{\sqrt{K}}{2N}(1-t^2), & \hbox{$\langle Ax, Ae_i\rangle<0$,}
       \end{array}
     \right.
\end{eqnarray*}
where $i\in W_{k+1}\subseteq (\Lambda^k\cup T)^c$. By \eqref{e9}, \eqref{e11} and Lemma \ref{lem2}, we obtain
\begin{eqnarray}\label{e12}
&&(1-t^4)\sqrt{K}(\beta_1^{k+1}-\alpha_N^{k+1})\nonumber\\
&&\geq(1-t^4)\left(\langle A\widetilde{\omega}_{T\cup \Lambda^k}, A\widetilde{\omega}_{T\cup \Lambda^k}\rangle-\sqrt{K}\frac{\sum_{i\in W_{k+1}}|\langle Ae_i,A\widetilde{\omega}_{T\cup \Lambda^k}\rangle|}{N}\right)\nonumber\\
&&=\|A(\widetilde{\omega}_{T\cup \Lambda^k}+\sum_{i\in W_{k+1}}t_ie_i)\|_2^2-\|A(t^2\widetilde{\omega}_{T\cup \Lambda^k}-\sum_{i\in W_{k+1}}t_ie_i)\|_2^2.
\end{eqnarray}
Let $l=|T\cap\Lambda^k|$, then $k\leq l\leq K$ and $Nk+K-l+N\leq NK+1$.  Since $A$ satisfies  RIP of order $NK+1$
with $\delta_{NK+1}$, $\|\widetilde{\omega}_{T\cup \Lambda^k}\|_2=1$ with $supp(\widetilde{\omega}_{T\cup \Lambda^k})\subseteq T\cup \Lambda^k$, $W_{k+1}\subseteq (T\cup\Lambda^k)^c$, it follows from Lemma \ref{lem1} that
\begin{eqnarray*}\label{e7}
&&\|A(\widetilde{\omega}_{T\cup \Lambda^k}+\sum_{i\in W_{k+1}}t_ie_i)\|_2^2-\|A(t^2\widetilde{\omega}_{T\cup \Lambda^k}-\sum_{i\in W_{k+1}}t_ie_i)\|_2^2\nonumber\\
&&\geq(1-\delta_{Nk+K-l+N})\left(\|\widetilde{\omega}_{T\cup \Lambda^k}+\sum_{i\in W_{k+1}}t_ie_i\|_2^2\right)\nonumber\\
&&\ \ -(1+\delta_{Nk+K-l+N})\left(\|t^2\widetilde{\omega}_{T\cup \Lambda^k}-\sum_{i\in W_{k+1}}t_ie_i\|_2^2\right)\nonumber\\
&&=(1-\delta_{Nk+K-l+N})\left(\|\widetilde{\omega}_{T\cup \Lambda^k}\|_2^2+\sum_{i\in W_{k+1}}t_i^2\right)\nonumber\\
&&\ \ -(1+\delta_{Nk+K-l+N})\left(t^4\|\widetilde{\omega}_{T\cup \Lambda^k}\|_2^2+\sum_{i\in W_{k+1}}t_i^2\right)\nonumber\\
&&=(1-\delta_{Nk+K-l+N})(1+t^2)-(1+\delta_{Nk+K-l+N})(t^4+t^2)\nonumber\\
&&=(1+t^2)^2\left(\frac{1-t^2}{1+t^2}-\delta_{Nk+K-l+N}\right)\nonumber\\
&&\geq(1+t^2)^2\left(\frac{1-t^2}{1+t^2}-\delta_{NK+1}\right)\nonumber.
\end{eqnarray*}
Since
\begin{eqnarray*}\label{e8}
\frac{1-t^2}{1+t^2}=\frac{1}{\sqrt{\frac{K}{N}+1}}
\end{eqnarray*}
and  the condition $\delta_{NK+1}<\frac{1}{\sqrt{\frac{K}{N}+1}}$, we obtain
\begin{eqnarray*}
 (1-t^4)\sqrt{K}(\beta_1^{k+1}-\alpha_N^{k+1})
&\geq&(1+t^2)^2\left(\frac{1-t^2}{1+t^2}-\delta_{NK+1}\right)\nonumber\\
&\geq&(1+t^2)^2\left(\frac{1}{\sqrt{\frac{K}{N}+1}}-\delta_{NK+1}\right)\nonumber\\
&>&0,
\end{eqnarray*}
i.e., $\beta_1^{k+1}>\alpha_N^{k+1}$ which ensures that the set $\Lambda^{k+1}$ contains at least one correct index in the $(k+1)$-th iteration of the gOMP algorithm.

As mentioned, we have completed the proof of the theorem.
\end{proof}
Now combining the condition for success in the first iteration in Theorem \ref{the1} with that in non-initial iterations in Theorem \ref{the2}, we obtain overall sufficient condition of the gOMP algorithm guaranteeing the perfect recovery of $K$-sparse signals in the following theorem.
\begin{thm}\label{the3}
Suppose  $x$ is a $K$-sparse signal and the sensing matrix $A$ satisfies RIP of order $KN+1$ with the RIC $\delta_{NK+1}$ fulfilling
\begin{eqnarray*}
\delta_{NK+1}<\frac{1}{\sqrt{\frac{K}{N}+1}}.
\end{eqnarray*}
Then the gOMP algorithm can recover the signal $x$ exactly.
\end{thm}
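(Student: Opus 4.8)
The plan is to run an induction on the iteration count, using Theorem~\ref{the1} to start it, Theorem~\ref{the2} to propagate it, and a short termination step to conclude. The only preliminary point is that the single hypothesis $\delta_{NK+1}<\frac{1}{\sqrt{K/N+1}}$ supplies what both theorems need: it is literally the hypothesis of Theorem~\ref{the2}, and it also yields the hypothesis of Theorem~\ref{the1}, since $N\geq1$ and $K\geq1$ imply $(N-1)(K-1)\geq0$, i.e.\ $K+N\leq NK+1$, so Lemma~\ref{lem1} gives $\delta_{K+N}\leq\delta_{NK+1}<\frac{1}{\sqrt{K/N+1}}$.

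Now recall, as the paper notes, that a ``success'' in iteration $k$ means $\beta_1^k>\alpha_N^k$, and that this is equivalent to $T^k$ containing at least one index of $T-\Lambda^{k-1}$: every column indexed by $\Lambda^{k-1}$ is orthogonal to $r^{k-1}$, so $T^k$ lies outside $\Lambda^{k-1}$, and if $T^k$ contained no index of $T-\Lambda^{k-1}$ then all $N$ indices selected in $T^k$ would have correlation exceeding $\beta_1^k>\alpha_N^k$ while lying in $(T\cup\Lambda^{k-1})^c$, which is impossible because $\alpha_N^k$ is the $N$-th largest such correlation. Hence each successful iteration strictly enlarges $\Lambda^k\cap T$. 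Applying Theorem~\ref{the1} at $k=1$ and Theorem~\ref{the2} at each subsequent iteration (its hypothesis ``$k$ iterations already succeeded'' being exactly the inductive assumption) shows that, as long as the loop is still running, $|\Lambda^k\cap T|$ increases by at least one per iteration; since $|T|\leq K$, there is a first index $k^{\star}\leq K$ with $T\subseteq\Lambda^{k^{\star}}$, and throughout $|\Lambda^k|\leq Nk\leq NK\leq m$ (using $N\leq m/K$), so every projection and least-squares step is well defined.

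Finally, $T\subseteq\Lambda^{k^{\star}}$ gives $y=Ax=A_Tx_T\in\mathrm{span}(A_{\Lambda^{k^{\star}}})$, hence $r^{k^{\star}}=P^{\perp}_{\Lambda^{k^{\star}}}y=0$ and the loop terminates. Since $\delta_{|\Lambda^{k^{\star}}|}\leq\delta_{NK}\leq\delta_{NK+1}<1$, the columns of $A_{\Lambda^{k^{\star}}}$ are linearly independent, so $x$ is the unique vector supported on $\Lambda^{k^{\star}}$ with $Au=y$; therefore the output $\hat x=\arg\min_{u:\,\mathrm{supp}(u)=\Lambda^{k^{\star}}}\|y-Au\|_2$ equals $x$, which is the desired exact recovery. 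I expect no analytic obstacle at this stage, since all the estimates live in Lemmas~\ref{lem1}, \ref{lem2} and Theorems~\ref{the1}, \ref{the2}; the only points needing care are the RIC order reduction $NK+1\to K+N$ when invoking Theorem~\ref{the1}, the bookkeeping ensuring that each success contributes a genuinely new correct index, and the full-column-rank/uniqueness step that upgrades $T\subseteq\Lambda^{k^{\star}}$ to $\hat x=x$.
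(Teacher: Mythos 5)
Your proposal is correct and follows essentially the same route as the paper: reduce $\delta_{K+N}\leq\delta_{NK+1}$ via Lemma \ref{lem1} (using $K+N\leq NK+1$), start the induction with Theorem \ref{the1}, and propagate it with Theorem \ref{the2}. The extra bookkeeping you supply --- that each success adds a genuinely new correct index, that the loop therefore halts with $T\subseteq\Lambda^{k^{\star}}$, and that $\delta_{NK+1}<1$ forces the final least-squares step to return $x$ exactly --- is detail the paper leaves implicit, and it is all correct.
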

\begin{proof}
For $N\geq1,\  K\geq1$ and $N\leq \min\{K,\ \frac{m}{K}\}$, then $K+N\leq NK+1$. It follows Lemma \ref{lem1} that
\begin{eqnarray*}
\delta_{K+N}\leq\delta_{NK+1}<\frac{1}{\sqrt{\frac{K}{N}+1}}.
\end{eqnarray*}
By Theorems \ref{the1} and \ref{the2},  the gOMP algorithm can recover perfectly any $K$-sparse signals  under the sufficient condition
$\delta_{NK+1}<\frac{1}{\sqrt{\frac{K}{N}+1}}$ from $y=Ax$.
\end{proof}
\textbf{Remark}\ \textbf{4.}
The condition $\delta_{NK+1}<\frac{1}{\sqrt{\frac{K}{N}+1}}$ is weaker than the sufficient condition $\delta_{NK+1}<\frac{\sqrt{N}}{\sqrt{K}+\sqrt{N}}$ in \cite{18}.

\textbf{Remark}\ \textbf{5.}
If $N=1$, this sufficient condition is consistent with the sharp condition $\delta_{K+1}<\frac{1}{\sqrt{K+1}}$  of OMP  in \cite{16'}.

In the following theorem, we show that the proposed bound $\delta_{NK+1}<\frac{1}{\sqrt{\frac{K}{N}+1}}$ is optimal.
\begin{thm}\label{the4}
For any given $K\in \N^+$, there are a $K$-sparse signal $x$ and a matrix $A$ satisfying
\begin{eqnarray*}
\delta_{NK+1}=\frac{1}{\sqrt{\frac{K}{N}+1}}
\end{eqnarray*}
such that the gOMP may fail.
\end{thm}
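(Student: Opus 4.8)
Given $K$ (and the gOMP parameter $N$ with $1\leq N\leq K$), I would build $A$ through its Gram matrix, reverse-engineering the extremal configuration from the places where the inequality chain in the proof of Theorem~\ref{the1} degenerates into equalities. Fix the support $T=\{1,\dots,K\}$ and take $x=\frac{1}{\sqrt K}\mathbf 1_T$, since equality in $\|x\|_1\leq\sqrt K\,\|x\|_2$ forces the nonzero entries of $x$ to be equal. Equality in $\langle Ax,Ax\rangle\leq\sqrt K\,\beta_1^1$ then forces $\langle A_i,Ax\rangle=\beta_1^1>0$ for every $i\in T$, and equality in $\alpha_N^1\leq\frac1N\sum_{i\in W_1}|\langle A_i,Ax\rangle|$ forces $N$ ``decoy'' columns, which I index by $B=\{K+1,\dots,K+N\}$, each obeying $|\langle A_j,Ax\rangle|=\alpha_N^1$; to let the gOMP fail I additionally impose $\alpha_N^1=\beta_1^1$. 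Finally, equality in the two RIP estimates forces $w_-:=x+\sum_{i\in B}t_ie_i$ to be a $(1-\delta)$-eigenvector and $w_+:=t^2x-\sum_{i\in B}t_ie_i$ to be a $(1+\delta)$-eigenvector of $A^{'}A$, where $\delta:=(\tfrac KN+1)^{-1/2}$, $t=-\frac{\sqrt{K/N+1}-1}{\sqrt{K/N}}$, and the $t_i=-\frac{\sqrt K}{2N}(1-t^2)$ are all taken of one sign, exactly as in Lemma~\ref{lem2}.

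Carrying this out (for $K\geq2$): because $x$ lives on $T$, $\sum_{i\in B}t_ie_i$ on $B$, and $\sum_{i\in B}t_i^2=t^2\|x\|_2^2$ by \eqref{e33}, the vectors $w_-,w_+$ are orthogonal and together span the two-dimensional ``symmetric'' subspace generated by $\frac1{\sqrt K}\mathbf 1_T$ and $\frac1{\sqrt N}\mathbf 1_B$. Asking $w_-,w_+$ to be eigenvectors of $A^{'}A$ with eigenvalues $1-\delta,1+\delta$ pins down the $2\times2$ Gram block on that subspace, and imposing $\|A_i\|_2=1$ (within the ansatz symmetric across $T$ and across $B$) then fixes all remaining inner products. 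One is led to the matrix with $\|A_i\|_2=1$, $\langle A_i,A_j\rangle=-\frac{N}{(K+N)(K-1)}$ for $i\neq j$ in $T$, $\langle A_i,A_j\rangle=\frac{N}{(K+N)(N-1)}$ for $i\neq j$ in $B$, $\langle A_i,A_j\rangle=\frac{1}{K+N}$ for $i\in T$, $j\in B$, and all other columns orthonormal and orthogonal to $A_{T\cup B}$. The Gram matrix of $A_{T\cup B}$ then has eigenvalues $1-\delta$, $1+\delta$, $1+\frac{\delta^2}{K-1}$ (multiplicity $K-1$), and, when $N\geq2$, $1-\frac{\delta^2}{N-1}$ (multiplicity $N-1$). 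Since $0<\delta<1\leq K-1$ we get $\frac{\delta^2}{K-1}\leq\delta$ (and likewise $\frac{\delta^2}{N-1}\leq\delta$ when $N\geq2$), so this matrix is positive definite; moreover $K+N\leq NK+1$, so Lemma~\ref{lem1} gives $\delta_{NK+1}\geq\delta_{K+N}\geq\delta$, while Cauchy interlacing over subsets of $T\cup B$ together with the orthogonality of the remaining columns gives the matching upper bound, whence $\delta_{NK+1}=\delta=(\tfrac KN+1)^{-1/2}$ exactly. A short computation gives $\langle A_i,Ax\rangle=\frac{\sqrt K}{K+N}$ for $i\in T$ and $|\langle A_j,Ax\rangle|=\frac{\sqrt K}{K+N}$ for $j\in B$, so $\beta_1^1=\alpha_N^1$, and in the first iteration the gOMP may choose $T^1=B\subseteq T^c$, including no index of $T$.

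For $N=1$ this already finishes the proof: the OMP runs exactly $K$ iterations adding one index each, so after spending the first on a wrong index it stops with $|\Lambda^K|=K$ but $T\not\subseteq\Lambda^K$; as $|\Lambda^K\cup T|\leq K+1$ and $\delta_{K+1}<1$, those columns are linearly independent, so $y\notin{\rm range}(A_{\Lambda^K})$ and the output cannot be $x$. For $N\geq2$ one further ingredient is needed, since after a bad first iteration the gOMP has, in principle, enough remaining iterations to gather all of $T$ and still recover. I would fix this by installing not one but $L:=\min\{K,m/K\}$ pairwise disjoint decoy batches $B_1,\dots,B_L$ of size $N$ (with $B_1=B$), where $B_{k+1}$ is engineered so that in the residual $r^k$ left after the gOMP has absorbed $B_1,\dots,B_k$ the columns of $B_{k+1}$ again tie with the surviving true columns; then some run of the gOMP selects $B_1,\dots,B_L$ without ever entering $T$, and by keeping the true columns off the span of all the decoys (so that $y\notin{\rm span}(A_{B_1\cup\cdots\cup B_L})$) the final least-squares estimate is not $x$.

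The hard part will be precisely this last construction: one must choose the batches $B_{k+1}$ so that the tie regenerates at every one of the $L$ iterations while the entire Gram spectrum stays inside $[1-\delta,1+\delta]$, so that $\delta_{NK+1}$ remains exactly $\delta$. Tracking the eigenvalues as the decoy blocks accumulate, and the dimension bookkeeping needed to make $m$ large enough to host every column yet small enough that the gOMP exhausts its iteration budget, is where the real effort lies; everything else is routine verification.
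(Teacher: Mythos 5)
Your completed construction is correct and is essentially the paper's argument: the paper also fixes $T=\{1,\dots,K\}$, a block $B$ of $N$ decoy columns, cross-correlations $\langle A_i,A_j\rangle=\frac{1}{K+N}$ for $i\in T$, $j\in B$, and pads with columns orthonormal to everything, so that the Gram spectrum on $T\cup B$ is $\{1\pm\delta\}$ together with eigenvalues strictly inside, giving $\delta_{NK+1}=\delta=(\frac{K}{N}+1)^{-1/2}$ exactly and the tie $\beta_1^1=\alpha_N^1=\frac{\sqrt K}{K+N}\|x\|_2/\|x\|_\infty^{-1}$ (in the paper, $\frac{K}{K+N}$ for the unnormalized $x=\mathbf 1_T$). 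The only real difference is inside the $T$- and $B$-blocks: the paper takes $G_{TT}=\frac{K}{K+N}I_K$ and $G_{BB}=I_N+\frac{1}{K+N}J_N$ (so its columns are \emph{not} unit-norm, despite the normalization convention announced in the introduction), whereas you keep unit diagonal and push the slack into small intra-block correlations $-\frac{N}{(K+N)(K-1)}$ and $\frac{N}{(K+N)(N-1)}$; your version is the cleaner one, and your eigenvalue bookkeeping and interlacing argument for $\delta_{NK+1}=\delta$ check out.

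The one place you overshoot is the final stage. The paper's notion of failure is the per-iteration criterion stated at the top of Section~\ref{3}: gOMP ``succeeds'' iff every iteration captures at least one index of $T$, and Theorems~\ref{the1}--\ref{the2} are proved against exactly that criterion. Accordingly, the paper's proof of Theorem~\ref{the4} stops at $\beta_1^1=\alpha_N^1$ and declares failure; it does not attempt to show that the eventual least-squares output differs from $x$. So the multi-batch decoy construction you sketch for $N\geq 2$ --- which you rightly identify as the genuinely hard part, and which you do not carry out --- is not needed to match the theorem as the paper intends it; your proof is already complete at the point where the first identification step can return $T^1=B\subseteq T^c$. (Your concern is a fair criticism of the theorem's phrasing for $N\geq2$, but resolving it is beyond what the paper itself does.) Two small points in your $N=1$ coda: the degenerate case $K=1$ needs to be quoted separately since your intra-$T$ formula has $K-1$ in the denominator, and $|\Lambda^K\cup T|$ need not be $\leq K+1$ if later iterations also go wrong --- but since your whole Gram spectrum lies in $[1-\delta,1+\delta]$ with $\delta<1$, every column subset is linearly independent and the conclusion $y\notin\mathrm{range}(A_{\Lambda^K})$ survives.
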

\begin{proof}
For any given positive integer $K$, let $A\in\R^{(NK+1)\times (NK+1)}$ be
\begin{eqnarray*}
A=
\begin{pmatrix}
  \ & \  & \  & 0 & \cdots & 0 & \frac{1}{b} & \cdots & \frac{1}{b} \\
  \  & \sqrt{\frac{K}{K+N}}I_{K} & \  & \vdots & \vdots & \vdots & \vdots & \vdots & \vdots\\
  \  & \ & \  & 0 & \cdots & 0 & \frac{1}{b} & \cdots & \frac{1}{b} \\
  0 & \cdots & 0 & \  & \  & \  & 0 & \cdots& 0 \\
  \vdots & \vdots & \vdots & \  & I_{NK+1-N-K} & \  & \vdots & \vdots & 0\\
  0 & \cdots & 0 & \  & \  & \  & 0 & \cdots & 0 \\
  \frac{1}{b} & \cdots & \frac{1}{b} & 0 & \cdots & 0  & \  & \  & \  \\
  \vdots & \vdots & \vdots & \vdots & \vdots & \vdots & \  & I_{N} & \  \\
  \frac{1}{b} & \cdots & \frac{1}{b}& 0 & \cdots & 0 & \  & \  & \  \\
\end{pmatrix},
\end{eqnarray*}
where $b=\sqrt{K(K+N)}$.
Then we have that
\begin{eqnarray*}
A'A=\begin{pmatrix}
  \ & \  & \  & 0 & \cdots & 0 & \frac{1}{K+N} & \cdots & \frac{1}{K+N} \\
  \  & \frac{K}{K+N}I_{K} & \  & \vdots & \vdots & \vdots & \vdots & \vdots & \vdots\\
  \  & \ & \  & 0 & \cdots & 0 & \frac{1}{K+N} & \cdots & \frac{1}{K+N} \\
  0 & \cdots & 0 & \  & \  & \  & 0 & \cdots& 0 \\
  \vdots & \vdots & \vdots & \  & I_{NK+1-N-K} & \  & \vdots & \vdots & 0\\
  0 & \cdots & 0 & \  & \  & \  & 0 & \cdots & 0 \\
  \frac{1}{K+N} & \cdots & \frac{1}{K+N} & 0 & \cdots & 0 & 1+\frac{1}{K+N} & \cdots &\frac{1}{K+N}\\
  \vdots & \vdots & \vdots & \vdots & \vdots & \vdots & \vdots& \ddots & \vdots \\
  \frac{1}{K+N} & \cdots & \frac{1}{K+N}& 0 & \cdots & 0 & \frac{1}{K+N}& \cdots&1+\frac{1}{K+N} \\
\end{pmatrix}.
\end{eqnarray*}
Moreover, by direct calculation, we obtain that
\begin{eqnarray*}
&&\begin{vmatrix}
  A'A-\lambda I \\
\end{vmatrix}=(1-\lambda)^{NK-K}(\frac{K}{K+N}-\lambda)^{K-1}(\lambda^2-2\lambda+\frac{K}{K+N}).
\end{eqnarray*}
It is  clear that $\frac{K}{K+N}$ and $1$ are eigenvalue of $A'A$ with multiplicity of $K-1$ and $NK-K$  respectively. $1\pm\frac{1}{\sqrt{\frac{K}{N}+1}}$ also are  eigenvalue of $A'A$.
 Therefore we have
 \begin{eqnarray*}
 \delta_{NK+1}=\frac{1}{\sqrt{\frac{K}{N}+1}}.
 \end{eqnarray*}

 Consider $K$-sparse signal $x=(1,1,\cdots,1,0\cdots,0)'\in\R^{NK+1}$, i.e., $T=supp(x)=\{1,2,\cdots,K\}$.
As $i\in T$, we have
 \begin{eqnarray*}
 |\langle Ae_i, y\rangle|&=&|\langle Ae_i, Ax\rangle|=|\langle A'Ae_i, x\rangle|=\frac{K}{K+N}.
 \end{eqnarray*}
For $i\in \{K+1, \cdots, NK+1-N\}$, it follows immediately that
\begin{eqnarray*}
 |\langle Ae_i, y\rangle|&=&|\langle Ae_i, Ax\rangle|
 =|\langle A'Ae_i, x\rangle|=0
\end{eqnarray*}
If $i\in \{NK+2-N, \cdots, NK+1\}$, we have
\begin{eqnarray*}
 |\langle Ae_i, y\rangle|&=&|\langle Ae_i, Ax\rangle|
 =|\langle A'Ae_i, x\rangle|
 =\frac{K}{K+N}.
 \end{eqnarray*}
Therefore, we have $\beta_1^1=\frac{K}{K+N}$ and $\alpha_N^1=\frac{K}{K+N}$  by the definitions of $\beta_1^1$ and $\alpha_N^1$, that is, $\beta_1^1=\alpha_N^1$.
 This implies the gOMP may fail to identify at least one correct index in the first iteration. So the gOMP algorithm may fail for the given matrix $A$ and the $K$-sparse signal $x$.
 \end{proof}
\subsection{Noise case}
\ \

In the subsection, we show a sufficient condition guarantees exact support identification by the gOMP algorithm from $y=Ax+e$.
 This sufficient condition is in terms of the RIC $\delta_{NK+1}$ and the minimum magnitude of the nonzero entries of $K$-sparse signal $x$. Here, we only consider $l_2$
 bounded noise, i.e., $\|e\|_2\leq\varepsilon$.

\begin{thm}\label{the5}
Suppose $\|e\|_2\leq\varepsilon$ and  the sensing matrix $A$ satisfies
\begin{eqnarray}\label{e13}
\delta_{NK+1}<\frac{1}{\sqrt{\frac{K}{N}+1}}.
\end{eqnarray}
Moreover, assume all the nonzero components $x_i$ satisfy
\begin{eqnarray}\label{e14}
|x_i|>\frac{\frac{2\sqrt{K}\varepsilon}{\sqrt{\frac{K}{N}+1}}}{\frac{1}{\sqrt{\frac{K}{N}+1}}-\delta_{NK+1}}.
\end{eqnarray}
Then  the gOMP algorithm  with the stopping rule $\|r^k\|_2\leq\varepsilon$ recovers the correct support of any $K$-sparse signals $x$.
\end{thm}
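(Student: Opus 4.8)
The plan is to re-run the arguments of Theorems \ref{the1} and \ref{the2} with the error vector $e$ carried through every estimate, and then to examine when the stopping test fires. Two statements are needed: (i) \emph{progress}: while $T\not\subseteq\Lambda^k$ and the loop is still running, the $(k+1)$-th iteration adds a new index of $T$ to $\Lambda^{k+1}$; and (ii) \emph{correct halting}: the test $\|r^k\|_2\leq\varepsilon$ holds exactly when $T\subseteq\Lambda^k$. Since a successful iteration strictly enlarges $T\cap\Lambda^k$, (i) and (ii) together force termination with $T\subseteq\Lambda^k$ within $K$ iterations, which is the asserted support recovery. As in Theorem \ref{the2}, decompose $r^k=P^\perp_{\Lambda^k}(Ax+e)=A\widetilde{\omega}_{T\cup\Lambda^k}+v$ with $\omega_{T\cup\Lambda^k}=(x_{T-\Lambda^k},-z_{\Lambda^k})'$ and $v=P^\perp_{\Lambda^k}e$; the new feature is that $\|v\|_2\leq\|e\|_2\leq\varepsilon$, so $|\langle A_i,v\rangle|\leq\varepsilon$ for every $i$.

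The one genuinely new idea is that, in contrast to the proofs of Theorems \ref{the1} and \ref{the2}, one must \emph{not} rescale $\widetilde{\omega}_{T\cup\Lambda^k}$ to unit norm (the error has a fixed magnitude), but instead keep the scalar $\rho:=\|\omega_{T\cup\Lambda^k}\|_2$ explicit, since it is $\rho$ that turns the estimate into the signal-to-noise condition \eqref{e14}. Write $\widetilde{\omega}_{T\cup\Lambda^k}=\rho\,\widetilde{\omega}_0$ with $\|\widetilde{\omega}_0\|_2=1$, and put $\widehat\beta:=\|A^{'}_{T-\Lambda^k}A\widetilde{\omega}_0\|_\infty$ and $\widehat q:=\tfrac1N\sum_{i\in W_{k+1}}|\langle Ae_i,A\widetilde{\omega}_0\rangle|$. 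Substituting $r^k=\rho A\widetilde{\omega}_0+v$ into the definitions of $\beta_1^{k+1},\alpha_N^{k+1}$ and bounding each $|\langle A_i,v\rangle|$ by $\varepsilon$ yields $\beta_1^{k+1}\geq\rho\widehat\beta-\varepsilon$ and $\alpha_N^{k+1}\leq\rho\widehat q+\varepsilon$. Now apply to the \emph{unit} vector $\widetilde{\omega}_0$ and the set $W_{k+1}$ exactly the noiseless machinery of Theorem \ref{the2}: the chain \eqref{e16} gives $\langle A\widetilde{\omega}_0,A\widetilde{\omega}_0\rangle\leq\sqrt K\,\widehat\beta$, while Lemma \ref{lem2} (with the same $S=K/N$, $C=\sqrt K/N$, $t$ and $t_i$ as in Theorem \ref{the2}), Lemma \ref{lem1} and the RIP of order $NK+1$ (the order actually needed being $Nk+K-|T\cap\Lambda^k|+N\leq NK+1$, as in Theorem \ref{the2}) give
\begin{eqnarray*}
(1-t^4)\big(\langle A\widetilde{\omega}_0,A\widetilde{\omega}_0\rangle-\sqrt K\,\widehat q\big)\geq(1+t^2)^2\Big(\tfrac{1}{\sqrt{K/N+1}}-\delta_{NK+1}\Big).
\end{eqnarray*}
Combining these with $\tfrac{(1+t^2)^2}{1-t^4}=\tfrac{1+t^2}{1-t^2}=\sqrt{K/N+1}$ gives $\sqrt K(\widehat\beta-\widehat q)\geq\sqrt{K/N+1}\big(\tfrac{1}{\sqrt{K/N+1}}-\delta_{NK+1}\big)$.

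Multiplying by $\rho$ and subtracting the two $\varepsilon$-errors,
\begin{eqnarray*}
\sqrt K\big(\beta_1^{k+1}-\alpha_N^{k+1}\big)\geq\rho\,\sqrt{K/N+1}\Big(\tfrac{1}{\sqrt{K/N+1}}-\delta_{NK+1}\Big)-2\sqrt K\,\varepsilon,
\end{eqnarray*}
whose right-hand side is positive exactly when $\rho$ exceeds the quantity on the right of \eqref{e14}. Since $T-\Lambda^k\neq\varnothing$ we have $\rho^2=\|x_{T-\Lambda^k}\|_2^2+\|z_{\Lambda^k}\|_2^2\geq|x_{i_0}|^2$ for any $i_0\in T-\Lambda^k$, so that inequality is guaranteed by \eqref{e14}. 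Hence $\beta_1^{k+1}>\alpha_N^{k+1}$, i.e.\ at least one of the $N$ largest entries of $A^{'}r^k$ lies in $T-\Lambda^k$; running this as an induction on $k$ (with base case $\Lambda^0=\varnothing$, $\widetilde{\omega}_T=x$, $\rho=\|x\|_2$, treated identically) establishes (i).

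For (ii), while $T\not\subseteq\Lambda^k$ the RIP gives $\|r^k\|_2\geq\|A\widetilde{\omega}_{T\cup\Lambda^k}\|_2-\varepsilon\geq\rho\sqrt{1-\delta_{NK+1}}-\varepsilon$, and an elementary inequality (using $K\geq1$ and $0\leq\delta_{NK+1}<\tfrac{1}{\sqrt{K/N+1}}\leq1$) shows that \eqref{e14} also forces this lower bound to exceed $\varepsilon$, so the loop never exits early; whereas once $T\subseteq\Lambda^k$ we have $Ax\in\mathrm{span}(A_{\Lambda^k})$, hence $r^k=P^\perp_{\Lambda^k}e$ and $\|r^k\|_2\leq\varepsilon$, so the loop stops with $T\subseteq\Lambda^k$. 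I expect the main obstacle to be exactly this bookkeeping: keeping the scalar $\rho$ rather than normalising, splitting every correlation into its $\widetilde{\omega}_0$-part plus an $O(\varepsilon)$ error consistently, and checking that the single hypothesis \eqref{e14} does double duty (forcing both $\beta_1^{k+1}>\alpha_N^{k+1}$ and $\|r^k\|_2>\varepsilon$); the RIP/Lemma \ref{lem2} core is verbatim that of Theorem \ref{the2}.
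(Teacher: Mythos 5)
Your proposal is correct and follows essentially the same route as the paper: the same decomposition $r^k=A\widetilde{\omega}_{T\cup\Lambda^k}+P^{\perp}_{\Lambda^k}e$, the same application of Lemmas \ref{lem1} and \ref{lem2} with the noise correlations bounded by $\varepsilon$, and the same use of $\|\omega_{T\cup\Lambda^k}\|_2\geq\min_{i\in T-\Lambda^k}|x_i|$ to invoke \eqref{e14} (your factorization $\widetilde{\omega}=\rho\,\widetilde{\omega}_0$ is just a cosmetic repackaging of the paper's choice of $t_i$ proportional to $\|\omega_{T\cup\Lambda^k}\|_2$). Your part (ii), checking that the stopping rule $\|r^k\|_2\leq\varepsilon$ fires exactly when $T\subseteq\Lambda^k$, is a genuine (and correct) completeness addition that the paper's proof omits.
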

\begin{proof} Use mathematical induction method to prove the theorem. Suppose the gOMP performed $k$ iterations successfully.
Consider the $(k+1)$-th iteration. Firstly, we observe that
\begin{eqnarray*}
r^k&=&P^\perp_{\Lambda^k}y\\
&=&P^\perp_{\Lambda^k}A_Tx_T+P_{\Lambda^k}^\perp e\\
&=&A_{T\cup \Lambda^k}\omega_{T\cup \Lambda^k}+(I-P_{\Lambda^k})e
\end{eqnarray*}
for some $\omega_{T\cup \Lambda^k}$ as in the proof of Theorem \ref{the2}.
Consider the following two cases to prove the theorem.
\begin{itemize}
\item Case $1$: $T-\Lambda_k=\varnothing$
\end{itemize}

In this case, there is $T\subseteq\Lambda_k$.
Then  the correct support $T$ of the original $K$-sparse signal $x$ has already been  recovered.
\begin{itemize}
\item Case $2$: $T-\Lambda_k\neq\varnothing$, i.e., $|T-\Lambda_k|\geq1$
\end{itemize}

By the definitions of $\alpha^{k+1}_N$ and $\beta^{k+1}_1$, we obtain that
\begin{eqnarray}\label{e15}
\alpha_N^{k+1}&=&\min\{|\langle Ae_i, r^k\rangle||i\in W_{k+1},W_{k+1} \subseteq (T\cup \Lambda^k)^c\}\nonumber\\
&\leq&\frac{\sum_{i\in W_{k+1}}|\langle Ae_i,r^k\rangle|}{N}\nonumber\\
&\leq&\frac{\sum_{i\in W_{k+1}}|\langle Ae_i, A_{T\cup\Lambda^k}\omega_{T\cup\Lambda^k}\rangle|+\sum_{i\in W_{k+1}}|\langle Ae_i, (I-P_{\Lambda^k})e\rangle|}{N}\nonumber\\
&=&\frac{\sum_{i\in W_{k+1}}|\langle Ae_i, A\widetilde{\omega}_{T\cup\Lambda^k}\rangle|+\sum_{i\in W_{k+1}}|\langle Ae_i, (I-P_{\Lambda^k})e\rangle|}{N}
\end{eqnarray}
and
\begin{eqnarray}\label{e17}
\beta_1^{k+1}&=&\|A^{'}_{T-\Lambda^k}r^k\|_{\infty}\nonumber\\
&=&\|A^{'}_Tr^k\|_{\infty}\nonumber\\
&=&\|A^{'}_{T\cup \Lambda^k}r^k\|_{\infty}\nonumber\\
&\geq&\|A^{'}_{T\cup \Lambda^k}A_{T\cup\Lambda^k}\omega_{T\cup\Lambda^k}\|_{\infty}-\|A^{'}_{T\cup \Lambda^k}(I-P_{\Lambda^k})e\|_{\infty}.
\end{eqnarray}

Let $t=-\frac{\sqrt{\frac{K}{N}+1}-1}{\sqrt{\frac{K}{N}}}$
and
\begin{eqnarray*}
 t_i=\left\{
       \begin{array}{ll}
         -\frac{\sqrt{K}}{2N}(1-t^2)\|\omega_{T\cup \Lambda^k}\|_2, & \hbox{$\langle Ax, Ae_i\rangle\geq 0$;} \\
         +\frac{\sqrt{K}}{2N}(1-t^2)\|\omega_{T\cup \Lambda^k}\|_2, & \hbox{$\langle Ax, Ae_i\rangle<0$,}
       \end{array}
     \right.
\end{eqnarray*}
where $i\in W_{k+1}\subseteq (\Lambda^k\cup T)^c$.  Then we have
\begin{eqnarray}\label{e18}
\sum_{i\in W_{k+1}}t_i^2=t^2\|\omega_{T\cup \Lambda^k}\|_2^2.
\end{eqnarray}
It follows from \eqref{e16}, \eqref{e11}, \eqref{e15} and \eqref{e17} that
\begin{eqnarray*}
&&(1-t^4)\sqrt{K}\|\omega_{T\cup \Lambda^k}\|_2(\beta_1^{k+1}-\alpha_N^{k+1})\nonumber\\
&&\geq(1-t^4)\bigg(\langle A\widetilde{\omega}_{T\cup \Lambda^k}, A\widetilde{\omega}_{T\cup \Lambda^k} \rangle-\sqrt{K}\|\omega_{T\cup \Lambda^k}\|_2\|A^{'}_{T\cup \Lambda^k}(I-P_{\Lambda^k})e\|_{\infty}\bigg.\nonumber\\
&&\ \ \left.-\frac{\sqrt{K}\|\omega_{T\cup \Lambda^k}\|_2(\sum_{i\in W_{k+1}}|\langle Ae_i,A\widetilde{\omega}_{T\cup \Lambda^k}\rangle|+\sum_{i\in W_{k+1}}|\langle Ae_i, (I-P_{\Lambda^k})e\rangle|)}{N}\right)\nonumber\\
&&=\|A(\widetilde{\omega}_{T\cup \Lambda^k}+\sum_{i\in W_{k+1}}t_ie_i)\|_2^2-\|A(t^2\widetilde{\omega}_{T\cup \Lambda^k}-\sum_{i\in W_{k+1}}t_ie_i)\|_2^2\nonumber\\
&&\ \ -(1-t^4)\sqrt{K}\|\omega_{T\cup \Lambda^k}\|_2\left(\|A^{'}_{T\cup \Lambda^k}(I-P_{\Lambda^k})e\|_{\infty}
+\frac{\sum_{i\in W_{k+1}}|\langle Ae_i, (I-P_{\Lambda^k})e\rangle|}{N}\right).
\end{eqnarray*}
As in the proof of Theorem \ref{the2}, $l=|T\cap\Lambda^k|$ then $Nk+K-l+N\leq NK+1$.
Because  $A$ satisfies RIP of order $NK+1$
with $\delta_{Nk+1}$, $supp(\widetilde{\omega}_{T\cup \Lambda^k})\subseteq T\cup \Lambda^k$, $W_{k+1}\subseteq (T\cup\Lambda^k)^c$, it follows from \eqref{e18} and Lemma \ref{lem1} that
\begin{eqnarray*}\label{e7}
&&\|A(\widetilde{\omega}_{T\cup \Lambda^k}+\sum_{i\in W_{k+1}}t_ie_i)\|_2^2-\|A(t^2\widetilde{\omega}_{T\cup \Lambda^k}-\sum_{i\in W_{k+1}}t_ie_i)\|_2^2\nonumber\\
&&\geq(1-\delta_{Nk+K-l+N})\left(\|\widetilde{\omega}_{T\cup \Lambda^k}+\sum_{i\in W_{k+1}}t_ie_i\|_2^2\right)\nonumber\\
&&\ \ -(1+\delta_{Nk+K-l+N})\left(\|t^2\widetilde{\omega}_{T\cup \Lambda^k}-\sum_{i\in W_{k+1}}t_ie_i\|_2^2\right)\nonumber\\
&&=(1-\delta_{Nk+K-l+N})\left(\|\widetilde{\omega}_{T\cup \Lambda^k}\|_2^2+\sum_{i\in W_{k+1}}t_i^2\right)\nonumber\\
&&\ \ -(1+\delta_{Nk+K-l+N})\left(t^4\|\widetilde{\omega}_{T\cup \Lambda^k}\|_2^2+\sum_{i\in W_{k+1}}t_i^2\right)\nonumber\\
&&=(1-\delta_{Nk+K-l+N})\|\widetilde{\omega}_{T\cup \Lambda^k}\|_2^2(1+t^2)-(1+\delta_{Nk+K-l+N})\|\widetilde{\omega}_{T\cup \Lambda^k}\|_2^2(t^4+t^2)\nonumber\\
&&=(1-t^4)\|\widetilde{\omega}_{T\cup \Lambda^k}\|_2^2-\delta_{Nk+K-l+N}\|\widetilde{\omega}_{T\cup \Lambda^k}\|_2^2(1+t^2)^2\nonumber\\
&&=(1+t^2)^2\|\widetilde{\omega}_{T\cup \Lambda^k}\|_2^2\left(\frac{1-t^2}{1+t^2}-\delta_{Nk+K-l+N}\right)\nonumber\\
&&\geq(1+t^2)^2\|\widetilde{\omega}_{T\cup \Lambda^k}\|_2^2\left(\frac{1-t^2}{1+t^2}-\delta_{NK+1}\right)\nonumber.
\end{eqnarray*}
Moreover, notice the fact that
\begin{eqnarray*}
\|A^{'}(I-P_{\Lambda^k})e\|_\infty=\max_{i}|\langle Ae_i,(I-P_{\Lambda^k})e\rangle|\leq\|Ae_i\|_2\|(I-P_{\Lambda^k})e\|_2\leq\|e\|_2\leq\varepsilon.
\end{eqnarray*}
By the above three inequalities, \eqref{e13} and \eqref{e14}, it follows that
\begin{eqnarray*}
&&(1-t^4)\sqrt{K}\|\omega_{T\cup \Lambda^k}\|_2(\beta_1^{k+1}-\alpha_N^{k+1})\nonumber\\
&&\geq(1+t^2)^2\|\widetilde{\omega}_{T\cup \Lambda^k}\|_2^2\left(\frac{1-t^2}{1+t^2}-\delta_{NK+1}\right)-(1-t^4)\sqrt{K}\|\omega_{T\cup \Lambda^k}\|_2\nonumber\\
&&\ \ \left(\|A^{'}_{T\cup \Lambda^k}(I-P_{\Lambda^k})e\|_{\infty}
+\frac{\sum_{i\in W_{k+1}}|\langle Ae_i, (I-P_{\Lambda^k})e\rangle|}{N}\right)\nonumber\\
&&\geq(1+t^2)^2\|\widetilde{\omega}_{T\cup \Lambda^k}\|_2^2\left(\frac{1-t^2}{1+t^2}-\delta_{NK+1}\right)-(1-t^4)\sqrt{K}\|\omega_{T\cup \Lambda^k}\|_2\left(\varepsilon+\frac{N\varepsilon}{N}\right)\nonumber\\
&&=(1+t^2)^2\|\widetilde{\omega}_{T\cup \Lambda^k}\|_2\left(\left(\frac{1-t^2}{1+t^2}-\delta_{NK+1}\right)\|\widetilde{\omega}_{T\cup \Lambda^k}\|_2-2\sqrt{K}\varepsilon\frac{1-t^2}{1+t^2}\right)\nonumber\\
&&\geq(1+t^2)^2\|\widetilde{\omega}_{T\cup \Lambda^k}\|_2\left(\left(\frac{1-t^2}{1+t^2}-\delta_{NK+1}\right)\|x_{T-\Lambda_k}\|_2-2\sqrt{K}\varepsilon\frac{1-t^2}{1+t^2}\right)\nonumber\\
&&\geq(1+t^2)^2\|\widetilde{\omega}_{T\cup \Lambda^k}\|_2\left(\left(\frac{1}{\sqrt{\frac{K}{N}+1}}-\delta_{NK+1}\right)|T-\Lambda_k|\min_{i\in T-\Lambda_k} |x_i|-\frac{2\sqrt{K}\varepsilon}{\sqrt{\frac{K}{N}+1}}\right)\nonumber\\
&&>0,
\end{eqnarray*}
i.e., $\beta_1^{k+1}>\alpha_N^{k+1}$ which guarantees at least one  index selected  from the correct support in the $(k+1)-$th iteration.

Therefore the  gOMP with the stopping rule $\|r^k\|_2\leq\varepsilon$ recovers the correct support of any $K$-sparse signal $x$ under conditions \eqref{e13} and \eqref{e14}.
\end{proof}
\section{Acknowledgements}
This work was supported by the NSF of China (Nos.11271050,
11371183).

\end{document}